\newtheorem{theorem}{Theorem}
\newtheorem{corollary}{Corollary}[theorem]
\newtheorem{lemma}[theorem]{Lemma}
\newtheorem{conjecture}{Conjecture}
\theoremstyle{definition}
\theoremstyle{remark}
\title{Adaptive Fibonacci and Pairing Heaps}
\author{Andrew Frohmader}
\date{March 2020}
\begin{document}

\maketitle

\section{Introduction}
In this brief note, we present two adaptive heaps. See \cite{list-heaps} and \cite{Smooth_Heaps} for discussion of adaptive heaps. The first structure is a modification of the Fibonacci heap \cite{fibonacci:heap:1987}. The second, a relaxation of the Fibonacci like structure which is similar to a pairing heap \cite{pairing}. While the specification of the data structures is complete, the analysis is not. We welcome interested parties to complete it. 

\section{Adaptive Fibonacci Heap}
For ease of comparison, our structure mirrors that of CLRS Chapter 19 \cite{CLRS}. We include one additional pointer $root$ at the top level pointing to the first element added to the root list.

\subsection{Operations}
INSERT, FIND-MIN, UNION, and DECREASE-KEY are identical to CLRS 19 and have $O(1)$ amortized cost using the same potential function.

The entire difference is in how we build our trees during the EXTRACT-MIN operation. We add adaptivity to presorted input. We only need to modify the CONSOLIDATE$(H)$ operation of CLRS 19.2. For completeness, we include the EXTRACT-MIN code from 19.2. When we iterate over lists, we assume the iteration proceeds from the first element added to the list to the last element added to the list (oldest to newest).

\begin{algorithm}
\caption{EXTRACT-MIN$(H)$}
\begin{algorithmic}
    \STATE $z = H.min$
    \IF{$z \neq NIL$}
        \FOR{each child $x$ of $z$}
            \STATE add $x$ to the root list of $H$
            \STATE $x.p = NIL$
        \ENDFOR
        \STATE remove $z$ from the root list of $H$
        \IF{$z == z.right$}
            \STATE $H.min = NIL$
        \ELSE
            \STATE $H.min = z.right$
            \STATE CONSOLIDATE$(H)$
        \ENDIF
        $H.n = H.n-1$
    \ENDIF
    \RETURN $z$
\end{algorithmic}
\end{algorithm}

\begin{algorithm}
\caption{CONSOLIDATE$(H)$}
\begin{algorithmic}
    \STATE Initialize a new array $A[0 \dots D(H.n)]$ with all elements $NIL$
    \FOR{each node $x$ in the root list of $H$}
        \STATE APPEND$(x, x.degree, A)$
    \ENDFOR
    \STATE Create a new empty root list for $H$
    \FOR{each node $x$ in $A$}
        \IF{$x.parent == NIL$}
            \STATE insert $x$ into $H$'s root list
            \IF{$x < H.min$}
                \STATE $H.min = x$
            \ENDIF
        \ENDIF
    \ENDFOR
\end{algorithmic}
\end{algorithm}

\begin{algorithm}
\caption{APPEND$(x, d, A)$}
\begin{algorithmic}
    \STATE $y = A[d]$
    \IF{$y == NIL$}
        \STATE $break$
    \ELSIF{$y < x$}
        \IF{$y.degree == d$}
            \STATE Make $x$ a child of $y$ incrementing $y.degree$
        \ENDIF
        \IF{$y.parent == NIL$}
            \STATE APPEND$(y, y.degree, A)$
        \ENDIF
    \ELSIF{$y.parent == NIL$}
        \STATE Make $y$ a child of $x$ incrementing $x.degree$
    \ENDIF
    \STATE $A[d] = x$
\end{algorithmic}
\end{algorithm}

\newpage
\subsection{Example}
Say we insert the random sequence 
$$11, 13, 6, 10, 1, 8, 14, 12, 9, 5, 4, 3, 7, 2$$
into our heap and then call EXTRACT-MIN. EXTRACT-MIN removes $1$ and consolidates the remaining nodes. \textbf{Note:} the ``\textbf{else if} $y.parent == NIL$'' statement in APPEND causes $A[d] = x$ with $x.degree = d + 1$. These nodes are darkened in the example. The ``\textbf{if} $y.degree == d$'' statement ensures a darkened node is not given an additional child. 

\begin{figure}[ht]
    \centering
    \begin{tabular}{c c}
         (a) \includegraphics[width=5cm]{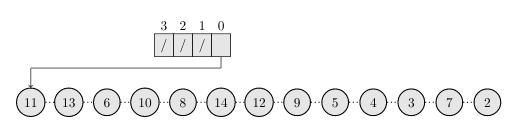} & (b) \includegraphics[width=5cm]{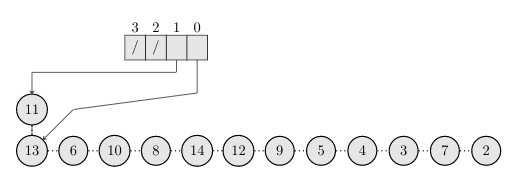}\\
         (c) \includegraphics[width=5cm]{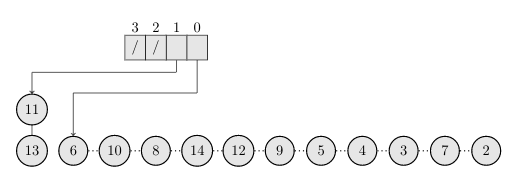} & (d) \includegraphics[width=5cm]{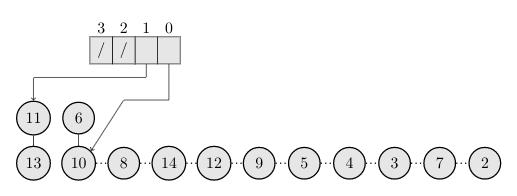}\\
         (e) \includegraphics[width=5cm]{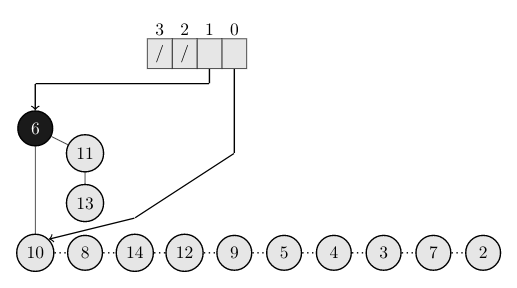} & (f) \includegraphics[width=5cm]{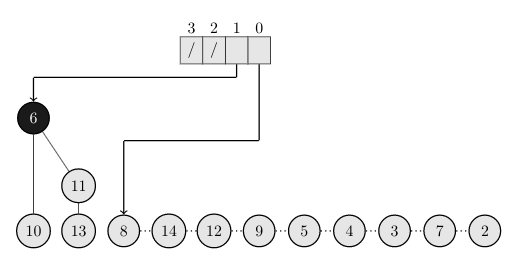}\\
        (g) \includegraphics[width=5cm]{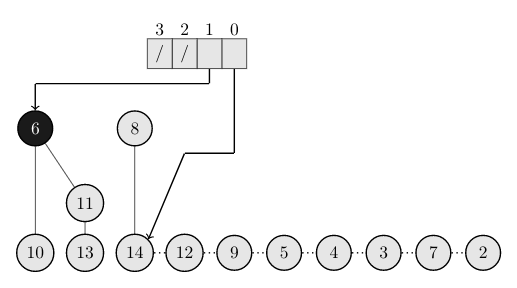} & (h) \includegraphics[width=5cm]{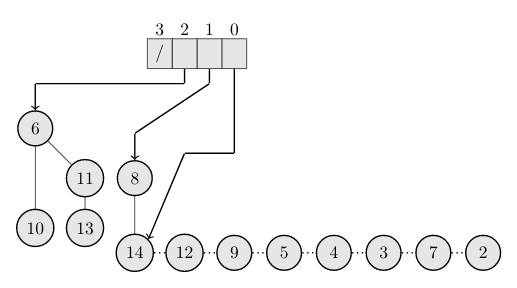}\\
    \end{tabular}
    \caption{Example of CONDSOLIDATE}
\end{figure}

\begin{figure}[ht]
    \centering
    \begin{tabular}{c c}
        (i) \includegraphics[width=5cm]{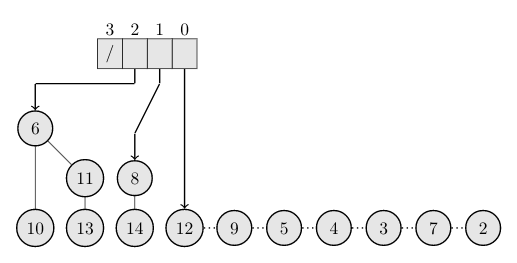} & (j) \includegraphics[height=4cm]{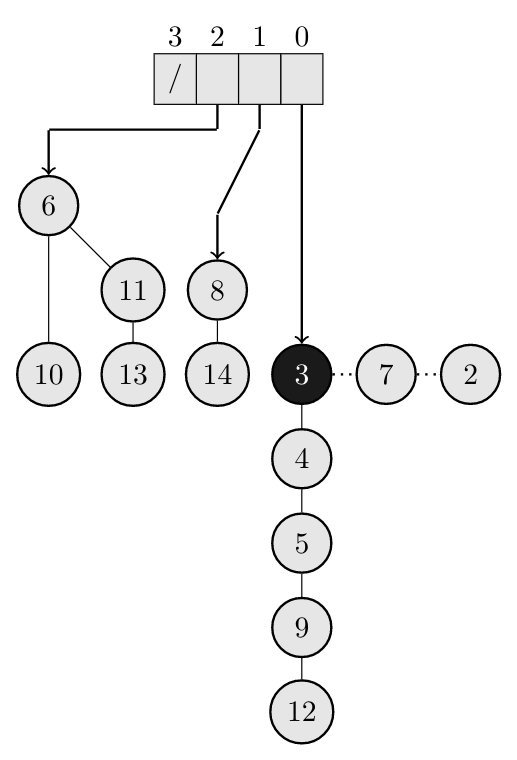}\\
        (k) \includegraphics[height=4cm]{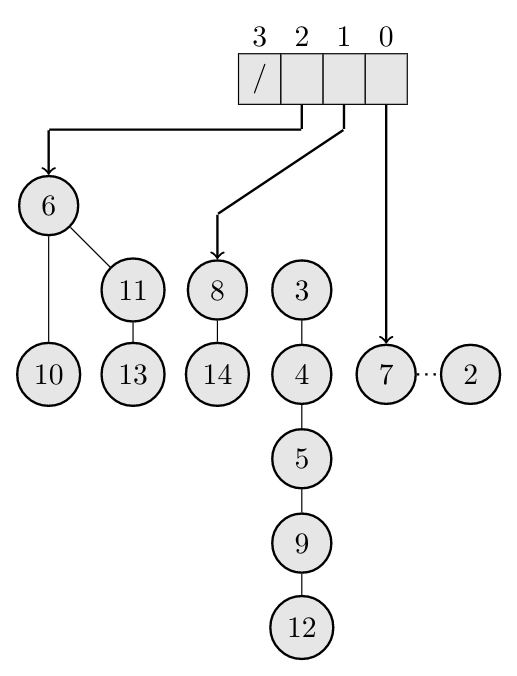} & (l) \includegraphics[height=4cm]{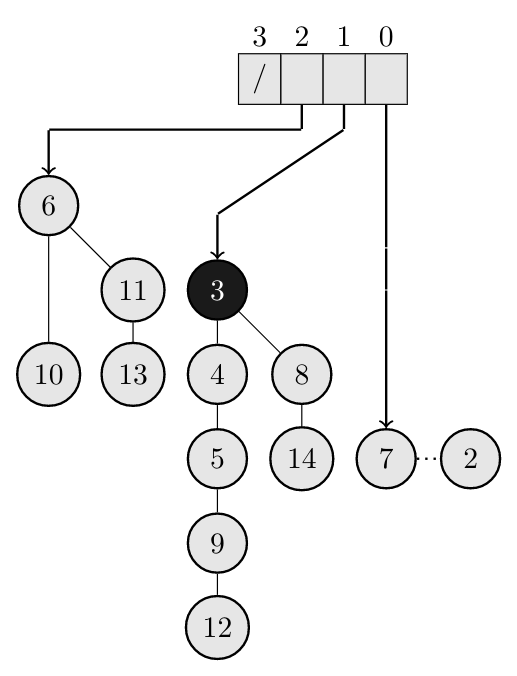}\\
        (m) \includegraphics[height=4cm]{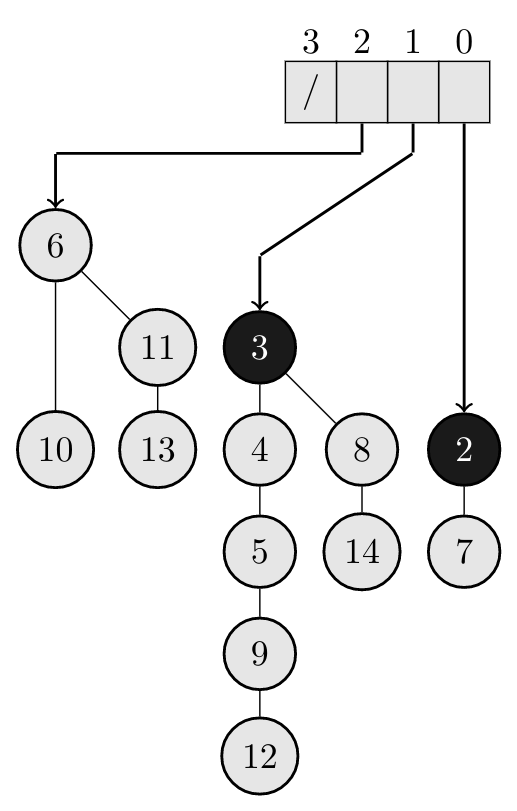} & (n) \includegraphics[height=4cm]{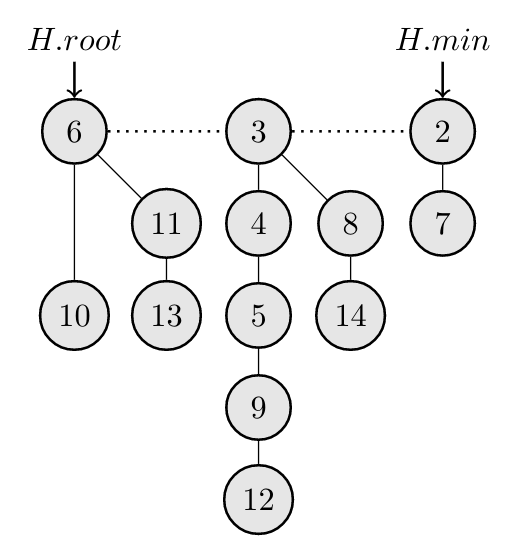}\\
    \end{tabular}
    
\caption{Example of CONDSOLIDATE continued.}
\end{figure}

\begin{lemma} \label{lemma:child_degree}
    Let $x$ be any node in the heap, and $k = x.degree$. Let $y_1, y_2, \dots , y_k$ denote the children of $x$ in the order in which they were linked to $x$, from earliest to latest. Then $y_1.degree \geq 0$ and $y_i.degree \geq i-2$ for $i = 2, 3, \dots , k$.
\end{lemma}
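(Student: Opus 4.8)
The plan is to follow the proof of the analogous fact for ordinary Fibonacci heaps (CLRS Lemma~19.1), replacing only the single step that uses how two trees are linked. The bound $y_1.degree \ge 0$ is immediate, since every node degree is a non-negative integer; so fix $i \in \{2,\dots,k\}$ and consider the moment at which $y_i$ most recently became a child of $x$.

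I would first record the standard structural observation: at that moment each of $y_1,\dots,y_{i-1}$ was already a child of $x$, so $x$ had degree at least $i-1$ just before $y_i$ was attached. This uses that INSERT, UNION, DECREASE-KEY and the CUT/CASCADING-CUT machinery are exactly those of CLRS, so a node leaves a child list only by being cut to the root list and enters $x$'s child list only through a link; hence if some $y_j$ with $j<i$ were missing from $x$'s children when $y_i$ was linked, $y_j$ would have to be linked to $x$ afterwards, contradicting the earliest-to-latest ordering of the surviving children.

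The one genuinely new ingredient is a claim about APPEND: during CONSOLIDATE a node gains a child only through ``Make $x$ a child of $y$'' or ``Make $y$ a child of $x$'', and in both cases the child just attached has degree at least the degree its new parent had immediately before the attachment. For ``Make $x$ a child of $y$'' this is forced by the guard $y.degree == d$ together with the syntactic fact that every call APPEND$(w,j,A)$ has $j = w.degree$ at entry (both the outer loop and the single recursive call pass the current degree of their node, and no degree has yet changed when this line is reached), so the new parent $y$ and the new child $x$ both have degree $d$ at that instant. For ``Make $y$ a child of $x$'' the child is $y = A[d]$, and I would use the easy invariant that, since CONSOLIDATE never cuts, an occupied cell $A[d]$ holds a node whose degree is at least $d$ (indeed $d$ if undarkened, $d+1$ if darkened), while the new parent $x$ has degree $d$; so again the child's degree is at least the parent's. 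Taking the parent to be our node $x$, the claim together with the structural observation gives $y_i.degree \ge i-1$ at the instant $y_i$ was linked.

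It remains to bound the loss of children since that link. Because it was the \emph{most recent} time $y_i$ became a child of $x$, the node $y_i$ has been a child of $x$ continuously since, and during that interval it can have lost at most one of its own children, for a second loss while marked would have triggered CASCADING-CUT and removed $y_i$ from $x$; CASCADING-CUT is unchanged. Hence $y_i.degree \ge (i-1)-1 = i-2$. I expect the only fiddly part of writing this out carefully to be the ``most recent link'' bookkeeping, since a child of $x$ may have been cut and later re-linked; but this is handled exactly as in CLRS and does not interact with the modified consolidation, so all of the new work is the short case check on APPEND above.
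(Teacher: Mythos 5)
Your proposal is correct and follows essentially the same route as the paper: the same two-case analysis of how a node acquires a child inside APPEND (child appended onto $A[d]=x$ with degree $d$, versus child $y=A[d]$ of degree $d$ or $d{+}1$ absorbed by $x$), yielding $y_i.degree \ge i-1$ at link time, followed by the standard CASCADING-CUT argument that at most one child is lost afterwards. Your write-up is somewhat more explicit than the paper's about the two supporting invariants (that APPEND's second argument always equals its first argument's current degree, and that an occupied cell $A[d]$ holds a node of degree $d$ or $d{+}1$), which the paper asserts without proof.
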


\begin{proof}
    Clearly $y_1.degree \geq 0$. For $i \geq 2$, we note that when $y_i$ was linked to $x$, all $y_1, y_2, \dots , y_{i-1}$ were children of $x$, and so we must have had $x.degree \geq i-1$. Let $x.degree = d$. We first show that when linked, $y_i.degree \geq d \geq i-1$. There are only two ways in which $y_i$ could have become a child of $x$ in CONSOLIDATE. 
    \begin{enumerate}
        \item If $x < y$, $A[d] = x$, and we call APPEND$(y, d, A)$. Then $y.degree = d$.
        \item If $x < y$, $A[d] = y$, $y.parent = NIL$, and we call APPEND$(x, d, A)$. Then $y.degree = d$ or $y.degree = d+1$. 
    \end{enumerate}
    Thus, $y.degree \geq d \geq i-1$. Since $y_i$ was linked, it has lost at most one child, since it would have been cut by CASCADING-CUT if it had lost two children. We conclude that $y_i.degree \geq i-2$. 
\end{proof}

With Lemma \ref{lemma:child_degree}, the remaining propositions can be proved nearly identically to CLRS. 

\begin{corollary}
    The maximum degree of all nodes is $O(\lg n)$.
\end{corollary}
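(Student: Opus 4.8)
The plan is to reuse the subtree-size argument of CLRS~\cite{CLRS} essentially verbatim, since Lemma~\ref{lemma:child_degree} supplies exactly the same bound on the degrees of a node's children that CLRS's Lemma~19.4 does. Call a rooted tree \emph{legal} if for every node $v$, writing $c_1, \dots, c_m$ for its children in link order, we have $c_1.degree \geq 0$ and $c_j.degree \geq j - 2$ for $j = 2, \dots, m$; by Lemma~\ref{lemma:child_degree} every subtree that ever appears in the heap is legal. For $k \geq 0$ let $s_k$ denote the minimum number of nodes in a legal tree whose root has degree $k$, so $s_0 = 1$ and $s_1 = 2$.

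First I would check that $s_k$ is strictly increasing in $k$: starting from a minimum legal tree with root degree $k+1$, deleting the subtree hanging off the last-linked child of the root leaves a legal tree with root degree $k$ and strictly fewer nodes, so $s_k \le s_{k+1} - 1$. Hence if $x$ has degree $k \ge 2$ with children $y_1, \dots, y_k$ in link order, then combining $y_1.degree \ge 0$, $y_i.degree \ge i-2$, monotonicity of $s$, and $s_0 = 1$ gives
\[
\mathrm{size}(x) \;\ge\; 1 + s_{y_1.degree} + \sum_{i=2}^{k} s_{y_i.degree} \;\ge\; 2 + \sum_{j=0}^{k-2} s_j .
\]

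Next I would prove $s_k \ge F_{k+2}$ by induction on $k$, where $F$ is the Fibonacci sequence ($F_0 = 0$, $F_1 = 1$, $F_m = F_{m-1} + F_{m-2}$): the base cases are $s_0 = 1 = F_2$ and $s_1 = 2 = F_3$, and the inductive step feeds the displayed inequality into the identity $F_{k+2} = 1 + \sum_{j=0}^{k} F_j$. Together with the standard estimate $F_{k+2} \ge \phi^{k}$, where $\phi = (1+\sqrt{5})/2$, this shows that any degree-$k$ node in a heap of $n$ elements roots a subtree of at least $\phi^{k}$ nodes, so $\phi^{k} \le n$, i.e.\ $k \le \log_\phi n = O(\lg n)$; one may then take $D(n) = \lfloor \log_\phi n \rfloor$ in CONSOLIDATE.

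I do not expect a genuine obstacle: everything past Lemma~\ref{lemma:child_degree} is routine and parallels CLRS line for line, so the real content is the lemma itself. The only two points deserving a remark are (i) that the ``darkened'' nodes produced by the \textbf{else if} $y.parent == NIL$ branch of APPEND need no special treatment, because such a node is linked to its parent through case~2 of Lemma~\ref{lemma:child_degree}, so the degree bound --- and hence the legal-tree constraint --- is unaffected; and (ii) the mild apparent circularity in CONSOLIDATE's use of the array $A[0 \dots D(H.n)]$ --- this corollary is exactly what licenses the choice $D(H.n) = \Theta(\lg n)$, guaranteeing the array is large enough for every index APPEND ever touches.
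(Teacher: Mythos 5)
Your proposal is correct and is exactly the route the paper intends: the paper explicitly defers to CLRS once Lemma~\ref{lemma:child_degree} is in hand, and your argument ($s_k \geq F_{k+2} \geq \phi^k$ via the child-degree bounds, hence $k \leq \log_\phi n$) is that CLRS argument carried out in full. No gaps; your two side remarks about the darkened nodes and the sizing of $A$ are accurate and worth keeping.
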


\begin{theorem}
    EXTRACT-MIN runs in $O(\lg n)$ amortized time.
\end{theorem}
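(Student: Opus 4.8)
The plan is to reuse the amortized analysis of EXTRACT-MIN from CLRS essentially verbatim, with the same potential $\Phi(H) = t(H) + 2m(H)$ that the paper already uses for the other operations (here $t(H)$ is the number of trees on the root list and $m(H)$ the number of marked nodes). The only genuinely new ingredient is a running-time bound for the modified CONSOLIDATE; once that is in hand, the rest is the CLRS computation.

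\textbf{Actual cost.} Extracting $z = H.min$ and splicing its children onto the root list costs $O(D(n)) = O(\lg n)$ by the Corollary, and afterwards the root list holds $t'(H) = t(H) + \deg(z) - 1 = t(H) + O(\lg n)$ trees. Since each invocation of APPEND does $O(1)$ work, it suffices to bound the total number of APPEND calls by $O(t'(H))$. The outer loop of CONSOLIDATE issues $t'(H)$ top-level calls, and each call issues at most one recursive call, so only the recursive calls need to be counted. I would first note that a recursive call APPEND$(y, y.degree, A)$ issued from APPEND$(x, d, A)$ always has $y.degree = d+1$: the node occupying slot $A[d]$ always has degree $d$ or $d+1$, and in the branch that recurses its degree is $d+1$ after the optional link. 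Hence the array index strictly increases along a recursion chain, so in particular CONSOLIDATE terminates. For the count, each recursive call is of exactly one of two kinds: either it immediately follows a link that makes $x$ a child of $y$ (a \emph{linked} recursion), or it is a \emph{free} recursion, which by inspection of APPEND happens precisely when slot $A[d]$ holds a ``darkened'' node, namely a root of degree $d+1$ lying in slot $d$. The number of linked recursions is at most the total number of links, which is at most $t'(H)$ because a link permanently converts a root into a non-root and no roots are created inside CONSOLIDATE. For the free recursions, I would argue that a node becomes darkened in a slot only via the ``\textbf{else if} $y.parent == NIL$'' adoption step, which is itself a link; that once that slot is next visited the darkened node is evicted from it; and that such a node can be darkened again only at a strictly larger slot and only by performing a further adoption link. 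Charging each free recursion to the adoption link that created its darkened occupant is therefore injective, so the free recursions also number at most $t'(H)$. Consequently CONSOLIDATE runs in time $O(t'(H))$, and EXTRACT-MIN has actual cost $O(t(H) + \lg n)$.

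\textbf{Potential change and conclusion.} After CONSOLIDATE the root list contains at most one tree per slot of $A$, hence at most $D(n) + 1 = O(\lg n)$ trees, while $m(H)$ does not increase (EXTRACT-MIN marks no node). So $\Phi$ drops by at least $t(H) - O(\lg n)$. Adding the actual cost to the change in potential and, exactly as in CLRS, rescaling the unit of potential by a large enough constant so that the drop of $t(H)$ in $\Phi$ absorbs the constant hidden in the $O(t(H))$ term of the actual cost, the amortized cost of EXTRACT-MIN is $O(\lg n)$.

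\textbf{Main obstacle.} Every step except the APPEND count is literally the CLRS argument. The obstacle is that recursive APPEND has no counterpart in the standard Fibonacci heap: the naive estimate ``(recursion depth $\le D(n)$) times ($t'(H)$ top-level calls)'' yields only $O((t(H) + \lg n)\lg n)$, which is not good enough for an $O(\lg n)$ amortized bound, so one genuinely needs the charging argument above showing that, like the links, the free recursions total only $O(t'(H))$. Making that argument precise requires a careful case analysis of APPEND, including the possibility that slot $A[d]$ holds a node that has already been given a parent.
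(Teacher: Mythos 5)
Your proof is correct, and it follows the route the paper itself intends: the paper offers no argument for this theorem beyond the remark that, given Lemma \ref{lemma:child_degree} (hence $D(n)=O(\lg n)$), the rest goes through ``nearly identically to CLRS,'' and your write-up is exactly that CLRS computation with the potential $t(H)+2m(H)$. What you add --- and what the paper's remark quietly skips --- is the one place where ``nearly identically'' is not enough: in CLRS, CONSOLIDATE does $O(1)$ work per root because every link strictly consumes a root, whereas here APPEND can recurse without performing a link (your ``free'' recursions, triggered when slot $A[d]$ holds a darkened occupant of degree $d+1$), so the $O(D(n)+t(H))$ bound on the actual cost genuinely requires your charging argument. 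That argument is sound: the slot invariant (when read, $A[d]$ holds a node of degree $d$ or $d+1$, darkened exactly when the adoption branch fired on it) gives the strict increase of the recursion index and hence termination; each link turns a root into a non-root and no roots are created, so links number at most the number of roots entering CONSOLIDATE; and a free recursion at slot $d$ evicts the darkened occupant and installs a non-darkened one, so consecutive free recursions at the same slot are separated by a fresh adoption link and the charging is injective. With the total number of APPEND calls thus $O(t(H)+\lg n)$ and the post-CONSOLIDATE root list of size at most $D(n)+1$ (every surviving root ends up in exactly one slot of $A$), the remainder is literally the CLRS potential calculation, as you say.
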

\begin{theorem}
    DECREASE-KEY runs in $O(1)$ amortized time.
\end{theorem}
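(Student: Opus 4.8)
The plan is to reuse the Fibonacci-heap potential function of CLRS 19 verbatim, namely $\Phi(H) = t(H) + 2m(H)$, where $t(H)$ is the number of trees in the root list of $H$ and $m(H)$ is the number of marked nodes. The first observation is structural: DECREASE-KEY, CUT, and CASCADING-CUT are exactly the CLRS procedures, and DECREASE-KEY never invokes CONSOLIDATE, which is the sole place where our heap departs from CLRS. Consequently nothing in the amortized accounting for DECREASE-KEY is affected by the modified EXTRACT-MIN, and in particular the $O(\lg n)$ degree bound (needed only to charge EXTRACT-MIN) is irrelevant here. The argument can therefore be transcribed from CLRS 19.3; I would still spell it out for completeness.

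The main steps are the standard ones. Let a call to DECREASE-KEY on node $x$ trigger an initial CUT of $x$ from its parent followed by $c$ CASCADING-CUT steps up the tree. The actual cost is $O(c+1)$. For the potential, each of the $c$ cascading cuts moves a marked, non-root node to the root list, decreasing $m(H)$ by $1$ and increasing $t(H)$ by $1$, a net change of $-1$; the initial CUT increases $t(H)$ by $1$ and clears at most one mark; and the topmost node in the chain may become newly marked, increasing $m(H)$ by at most $1$. Summing, $\Delta\Phi \le \bigl((t(H)+c) + 2(m(H)-c+2)\bigr) - \bigl(t(H)+2m(H)\bigr) = 4 - c$. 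Choosing the unit of potential large enough that it dominates the per-cut actual cost, the amortized cost is $O(c+1) + (4-c) = O(1)$.

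The only point that genuinely needs checking --- and the place I would concentrate any real verification effort --- is that the invariant underlying CASCADING-CUT still holds in our structure: a node may lose at most one child since it last became the child of another node before it is itself cut. This is immediate here because marking is untouched: CUT and CASCADING-CUT set and test \emph{mark} bits exactly as in CLRS, and the new CONSOLIDATE only manipulates child lists and degree fields (the ``darkened'' nodes of the example are a degree-bookkeeping phenomenon, created by the \textbf{else if} $y.parent == NIL$ branch, and carry no special marking). Since Lemma \ref{lemma:child_degree} already relies on this same CASCADING-CUT behaviour and is proved above, no further work is required, and the theorem follows.
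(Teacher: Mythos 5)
Your proposal is correct and is exactly the route the paper intends: the paper explicitly keeps DECREASE-KEY, CUT, CASCADING-CUT, and the potential $\Phi(H)=t(H)+2m(H)$ identical to CLRS 19, and since CONSOLIDATE is never invoked by DECREASE-KEY, the CLRS 19.3 accounting (amortized cost $O(c)+O(1)-c=O(1)$) carries over verbatim, which is all the paper's ``proved nearly identically to CLRS'' remark amounts to. (The small mismatch between your prose tree count $t(H)+c+1$ and the formula's $t(H)+c$ is immaterial to the bound.)
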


We would like to have some dynamic optimality result like this:

\begin{conjecture}
    The above heap is competitive with all comparison based heaps that have an $O(1)$ amortized decrease key.
\end{conjecture}

\section{Pairing Like Heap}
\subsection{Structure}
 Use the same structure as used above except we do not need to store degree information in each node. 
 
 Two nice things about this structure. One, it makes the locally optimal choice of only leaving local min in the root list. Two, starting from a list of degree 0 nodes, the degree of any node in the final tree is $O(\lg n)$. Thus, it has better structural properties than the standard pairing heap. 
 
\subsection{Operations}

Again, operations are the same as for Fibonacci heap presented in CLRS 19 except for the CONSOLIDATE procedure in EXTRACT-MIN. The CONSOLIDATE code uses pointers $p$ for previous, $c$ for current and travels in loops around the circularly linked root list until only one root remains. A $cycle$ of CONSOLIDATE is one loop from beginning to end of root list.

\begin{algorithm}
\caption{CONSOLIDATE$(H)$}
\begin{algorithmic}
    \STATE $p = H.root$
    \STATE $c = p.right$
    \WHILE{$c \neq p$}
        \STATE $n = c.right$
        \IF{ $p < c$}
            \STATE Make $c$ a child of $p$ and remove $c$ from $root$ list
        \ELSIF{$p.parent = None$}
            \STATE Make $p$ a child of $c$ and remove $p$ from $root$ list
        \ENDIF
        \STATE $p = c$
        \STATE $c = n$
    \ENDWHILE
    \STATE $H.min = H.root = c$
\end{algorithmic}
\end{algorithm}

\begin{lemma}
   Let $l_i$ be the root list after the $i^{th}$ cycle of CONSOLIDATE. Then $l_{i+1}$ contains the local minimum of $l_i$.
\end{lemma}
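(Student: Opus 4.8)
The plan is to show that a single cycle never detaches a node that is a local minimum of $l_i$, so that every such node reappears in $l_{i+1}$; here I read ``the local minimum of $l_i$'' as ``every local minimum of $l_i$,'' matching the paper's remark that the structure leaves only local minima in the root list. First I would pin down how the two traversal pointers move. In each iteration the cycle reads $n = c.right$ \emph{before} performing any surgery, and the only nodes it ever detaches are the ones currently held by $p$ or $c$, both of which lie strictly behind $n$ along $l_i$; detaching a node only rewrites the left/right pointers of that node's two neighbours, so it can change $c$'s left pointer but never the right pointer of the as-yet-unvisited node $n$. By induction on the iteration count this gives the invariant that $c$ walks the nodes of $l_i$ once each in their circular order, and that in the iteration with $c = v$ the pointer $p$ is the $l_i$-predecessor $\ell(v)$ of $v$ (with $H.root$ serving as predecessor of the first visited node). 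Consequently, whether a node $x$ gets detached during the cycle is decided entirely in the (at most two) iterations where $x$ equals $c$ or equals $p$: in the former $p = \ell(x)$, and in the latter $c$ is the $l_i$-successor $r(x)$ of $x$.

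Next I would run the two cases against the pseudocode for a local minimum $x$, i.e.\ a root with $x < \ell(x)$ and $x < r(x)$. When $c = x$ (so $p = \ell(x)$): since $x < \ell(x)$, the test $p < c$ (namely $\ell(x) < x$) fails, so the branch ``make $c$ a child of $p$'' is not taken and $x$ is not detached as the current node; the \texttt{elsif} branch can only detach $p = \ell(x)$, not $x$. When $p = x$ (so $c = r(x)$): since $x < r(x)$, the test $p < c$ succeeds, so we take the first branch, which makes $c = r(x)$ a child of $p = x$ and detaches $r(x)$, not $x$, and we never reach the \texttt{elsif} that would make $p = x$ a child of $c$. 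Because a node can leave the root list only in an iteration in which it is the current or the previous pointer, $x$ survives the entire cycle, hence $x \in l_{i+1}$. The special case $x = H.root$ is no different: in the first iteration it plays the role of $p$ with $c = r(x)$, and later (if the cycle reaches it) it plays the role of $c$ with $p = \ell(x)$, and the same two strict inequalities apply; if the cycle terminates before revisiting $H.root$, $x$ is trivially untouched.

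I expect the delicate part — and the step I would write out most carefully — to be the bookkeeping of the first paragraph: confirming that the mid-cycle list surgery preserves the invariant ``$c$ walks $l_i$ in order, $p$ trails one step behind in $l_i$,'' handling the wrap-around at $H.root$ and the precise termination of a cycle, which the pseudocode specifies only informally. I would also flag the standing assumption that the keys are distinct (or fix a tie-breaking rule), because with equal keys the \texttt{elsif} branch can demote a node that ties its right neighbour, so the statement needs ``$<$'' in the definition of local minimum on at least the right side; once that is granted, the case analysis above is the whole argument.
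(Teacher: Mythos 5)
The paper states this lemma without proof (the author explicitly leaves the analysis incomplete), so there is nothing to compare against; judged on its own terms, your argument has a genuine gap at exactly the point you flagged as ``delicate bookkeeping.'' The invariant you rely on --- that in the iteration with $c = x$ the pointer $p$ is the $l_i$-predecessor of $x$ --- is false at the seam between consecutive cycles. The \texttt{while} loop does not restart at each cycle: the first iteration of cycle $i+1$ has $c$ equal to the first node of $l_i$ but $p$ equal to the \emph{last node of $l_{i-1}$}, which may have been detached during cycle $i$ and hence lies in neither $l_i$ nor the root list. Since the branch ``\textbf{if} $p < c$: make $c$ a child of $p$'' carries no check that $p$ is still a root, a stale $p$ smaller than $c$ detaches $c$ even when $c$'s true $l_i$-predecessor is larger than $c$. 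Concretely, on $l_0 = [5,6,1,3]$ one cycle yields $l_1 = [5,1]$ with the detached node $3$ left in $p$; the next iteration compares $3 < 5$ and detaches $5$, so your case ``$c = x$, hence $p = \ell(x)$'' simply does not apply to the first node of $l_i$. (Your special-case discussion of $H.root$ addresses the first cycle only, where $p = H.root$ genuinely is the predecessor; it does not touch this later-cycle failure.)

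The lemma is still true, but closing the gap needs an extra argument that your write-up does not contain. Let $z = u_0$ be the last node of $l_{i-1}$ and $u_1, u_2, \dots$ its predecessors in $l_{i-1}$. If $u_0 \notin l_i$, it can only have been detached in the iteration where it was $c$, forcing $u_1 < u_0$; given $u_1 < u_0$, the \texttt{elsif} branch cannot fire when $u_1$ is $p$, so if $u_1 \notin l_i$ it too was detached as $c$, forcing $u_2 < u_1$; iterating, the last \emph{surviving} node $u_j = z_i$ of $l_i$ satisfies $z_i < \dots < u_1 < u_0$. Hence if the first node $w$ of $l_i$ is a local minimum of $l_i$, then $w < z_i < u_0 = p$, the test $p < c$ fails after all, and $w$ survives. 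You need this descending-chain lemma (or an equivalent statement that the stale $p$ always exceeds the last surviving root) to make your two-case analysis cover every iteration; without it the proof does not close. Your remaining points --- that only the nodes held by $p$ or $c$ are ever detached, that reading $n = c.right$ before surgery keeps the traversal aligned with $l_i$, and that ties must be broken for the \texttt{elsif} branch to behave --- are correct and worth keeping.
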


\begin{lemma}
   Let $k$ be the number of roots in $H$'s root list prior to CONSOLIDATE. The increase in degree of any node caused by CONSOLIDATE is less than  $2\lg k$.
\end{lemma}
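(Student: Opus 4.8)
The plan is to track how the degree of a fixed node $x$ can grow over the course of CONSOLIDATE. Observe that in a single cycle, the pointer $c$ sweeps once around the root list, and $x$ can acquire at most one new child during that cycle: $x$ gains a child only when $x$ plays the role of $p$ (or of $c$ in the \textbf{else if} branch) at one step, after which $p$ advances past $x$ and $x$ is no longer visited until the next cycle. So the increase in $x.degree$ is at most the number of cycles CONSOLIDATE performs.

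Next I would bound the number of cycles. By the first lemma above, $|l_{i+1}|$ is at most the number of local minima of $l_i$ (together with the few boundary elements the code may leave behind). A circular list of $m$ elements has at most $\lfloor m/2 \rfloor$ local minima, since no two adjacent elements can both be local minima; hence $|l_{i+1}| \le |l_i|/2$ up to an additive constant. Starting from $|l_0| = k$, after roughly $\lg k$ cycles only one root remains and CONSOLIDATE halts. Combining with the previous paragraph gives an increase of at most about $\lg k$ per node; the factor of $2$ in the statement is the slack that absorbs the off-by-small-constant issues (the \textbf{else if} branch occasionally pushing a node down without halving, and the boundary elements $p = H.root$ and the final $c$ that each cycle may carry over uncontracted).

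The main obstacle is making the halving argument on $|l_i|$ airtight in the presence of the \texttt{else if} clause. Unlike a clean ``pair up and keep the smaller'' pass, here when $p \ge c$ we only demote $p$ if $p.parent = None$; a node with a parent stays in the root list and is simply skipped, so a naive ``every step removes one of two elements'' claim fails. I would handle this by charging: in each cycle, walk the list and note that between any two consecutive \emph{surviving} roots, at least one element was removed (either $c$ became a child of $p$, or $p$ became a child of $c$), because a maximal run of retained roots would have to be monotone increasing in key and then the \textbf{if} $p < c$ branch fires at its end. Formalizing ``at least half are removed each cycle'' from this, and verifying the boundary bookkeeping around $H.root$ and the terminal assignment $H.min = H.root = c$, is where the real work lies; everything else is the routine geometric-series count yielding the $2\lg k$ bound.
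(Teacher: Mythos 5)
The paper never proves this lemma --- it is stated without proof (the author says explicitly that the analysis is incomplete) --- so your proposal has to stand on its own, and it contains one concrete error. Your opening claim that a fixed node $x$ gains at most one child per cycle is false. In a single cycle $x$ is visited as $c$ at one step and as $p$ at the very next step, and it can gain a child at \emph{both}: if $p \geq x$ and $p.parent = None$ then $p$ becomes a child of $x$ via the \textbf{else if} branch, and immediately afterward, with $x$ now playing the role of $p$, if $x < c$ then the new $c$ becomes a child of $x$ via the \textbf{if} branch. The two conditions $x \leq r_{j-1}$ and $x < r_{j+1}$ are simultaneously satisfiable --- indeed they hold exactly when $x$ is a local minimum of the current root list, i.e.\ exactly when $x$ survives into $l_{i+1}$. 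So the correct per-cycle bound is $2$, not $1$, and the factor of $2$ in the statement is entirely consumed by this count rather than being free slack. That matters for your plan: you intended to spend the factor of $2$ on the boundary elements and the non-halving corner cases, but there is nothing left to spend, so those issues now need a genuine argument (e.g.\ observing that a node removed during a cycle gains at most one child in that cycle and none afterward, and sharpening the cycle count to $\lceil \lg k \rceil$ or better).

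The rest of the outline is sound and matches the natural route: the survivors of a cycle are precisely the local minima of $l_i$ (this is the preceding lemma), no two adjacent roots can both be local minima, hence $|l_{i+1}| \leq \lfloor |l_i|/2 \rfloor$ and the number of cycles is $O(\lg k)$. Your worry about the \textbf{else if} clause breaking the halving argument is actually unfounded once you write down the survival condition: $r_j$ survives iff $r_{j-1} \geq r_j$ and $r_j < r_{j+1}$ (a root removed as $c$ cannot also need the $p.parent$ test), and adjacent survivors are immediately contradictory. So the halving step is the easy part; the per-cycle degree accounting is where your proposal needs repair.
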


\begin{conjecture}
    The degree of any node $x$ in $H$ is $O(\lg n)$.
\end{conjecture}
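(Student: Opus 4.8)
The plan is to derive the bound in two stages: first bound the increase in $\deg(x)$ caused by a single call to CONSOLIDATE, and then bound the number of such calls that can affect a fixed node $x$. The first stage is exactly the preceding lemma: one execution of CONSOLIDATE raises the degree of any node by less than $2\lg k$, where $k$ is the number of roots when the call begins, and $k \le n$, so each call contributes at most $2\lg n$. Everything therefore reduces to bounding how many of these calls can actually change $\deg(x)$, or showing that their contributions do not accumulate.

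First I would pin down when $x$ can acquire a new child. Following the pointers $p$ and $c$, one checks that the only nodes that can ever be $p$ or $c$ during a call are those on the root list when the call begins, together with, for at most one iteration apiece, the nodes removed during the call (and in fact only those removed via the ``make $c$ a child of $p$'' step, since that removed node becomes the next $p$). In particular a node that was buried by an earlier EXTRACT-MIN and has not since been cut is never visited, so $\deg(x)$ is unchanged unless $x$ is a root when the call begins. I would also re-derive the per-call estimate directly from the cycle structure: a given node is the value of $c$ at most once per cycle, so $x$ gains at most two children per cycle --- one from the descending run that ends at $x$ and one from the ascending run that begins at $x$ in the circular root list --- and strengthening the first lemma of this section to $|l_{i+1}| \le \lfloor |l_i|/2\rfloor$ shows there are at most $\lceil \lg k\rceil$ cycles, recovering the preceding lemma.

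The hard part will be the second stage: bounding the number of \emph{root episodes} of $x$, i.e. the number of EXTRACT-MIN calls at whose consolidation $x$ is a root and actually gains children. A node becomes a root on INSERT, on a DECREASE-KEY cut (cascading or not), or when its parent is extracted, and none of these gives an $n$-independent count: $x$ can survive a consolidation as the unique remaining root, be displaced from the minimum position by a freshly inserted small key, and then reabsorb freshly inserted larger keys on the next EXTRACT-MIN, repeating while the heap stays of size roughly $n$. To rule this out one needs an invariant tying $\deg(x)$ to the size of the subtree rooted at $x$ and maintained across \emph{all} operations --- for instance, that the $i$-th child currently linked to $x$ carried a subtree of size at least $c^{\,i}$ for some constant $c>1$ at the moment it was linked, which would give $\deg(x) = O(\lg n)$ immediately. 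The tool I would try to push is the first lemma of this section: a node that survives a cycle is a local minimum of the previous root list, so it only ever absorbs nodes against which it won a comparison; iterating this down the $O(\lg k)$ cycles of one call may force the absorbed subtrees to grow geometrically within that call, and the real challenge is to make such a statement persist through INSERT, UNION, DECREASE-KEY and a long run of EXTRACT-MINs, rather than holding only for a single consolidation of degree-$0$ nodes (the second of the two structural facts noted at the start of this section). I would not be surprised if this is where a counterexample, rather than a proof, is to be found.
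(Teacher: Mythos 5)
This statement is left as a conjecture in the paper --- there is no proof there to compare against --- and your proposal does not close it either, so the verdict is a genuine gap rather than an alternative route. Your first stage is sound but is essentially a restatement of the preceding lemma: each call to CONSOLIDATE adds less than $2\lg k \le 2\lg n$ to any degree, and only nodes that are on the root list when the call begins (plus, for one iteration each, nodes buried during that same call) can gain children. The entire content of the conjecture lives in your second stage, and there you correctly identify the obstruction without resolving it: nothing in the structure bounds the number of consolidations at which a fixed node $x$ sits on the root list while the heap size stays $\Theta(n)$, so the naive product of (per-call increase) and (number of root episodes) yields nothing, as your own adversarial scenario shows.

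The invariant you propose --- that the $i$-th child linked to $x$ carried a subtree of size at least $c^{\,i}$ at the moment of linking --- is the right shape of statement (it is how the $O(\lg n)$ rank bound is proved for binomial and Fibonacci trees), but you neither establish it nor explain how it survives DECREASE-KEY, which cuts subtrees out from under $x$ after they were counted toward its degree. In the heap of Section 2 this is exactly what Lemma \ref{lemma:child_degree} and the marking/cascading-cut mechanism control, and that argument leans on the stored degrees: a node is linked only under a node of equal degree, so losing at most one child before being cut preserves a Fibonacci-type recurrence. The pairing-like heap discards degree information and its CONSOLIDATE links by key comparison alone, so no analogue of Lemma \ref{lemma:child_degree} is available, and it is entirely possible that repeated DECREASE-KEYs hollow out the subtrees that justified $x$'s large degree. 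Your closing remark is the honest summary: as written, the proposal is a reduction to an unproved (and possibly false) persistent invariant, which is precisely why the paper records this as a conjecture rather than a theorem. If you want to make progress, the concrete open task is either to prove a subtree-size (or potential-based) invariant that is maintained by INSERT, UNION, DECREASE-KEY, and EXTRACT-MIN simultaneously, or to construct the counterexample you suspect exists.
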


\begin{conjecture} \label{conj:pairing-dk,em}
    DECREASE-KEY runs in $O(1)$ amortized time. EXTRACT-MIN runs in $O(\lg n)$ amortized time.
\end{conjecture}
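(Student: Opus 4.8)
The plan is to carry the CLRS Chapter~19 amortized analysis over almost verbatim, using the \emph{same} potential as for the Fibonacci heap, $\Phi(H)=C\bigl(t(H)+2\,m(H)\bigr)$, where $t(H)$ is the number of roots, $m(H)$ the number of marked nodes, and $C$ a constant fixed large enough at the end (CLRS uses $C=1$; a uniform rescaling is needed to absorb the constants coming from CONSOLIDATE). The first statement is then essentially free: DECREASE-KEY, INSERT, UNION and FIND-MIN are literally the CLRS procedures here, and the CLRS argument for DECREASE-KEY never invokes any degree bound, so one only has to recheck that multiplying $\Phi$ by a constant does no harm. A DECREASE-KEY that triggers $k$ cascading cuts has actual cost $c_1(1+k)$, raises $t(H)$ by $1+k$, and lowers $m(H)$ by at least $k-1$, so $\Delta\Phi\le C(3-k)$ and its amortized cost is at most $c_1(1+k)+C(3-k)$, which is $O(1)$ for every fixed $C\ge c_1$; INSERT/UNION/FIND-MIN stay $O(1)$ as well. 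Hence \emph{DECREASE-KEY is $O(1)$ amortized unconditionally}.

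For EXTRACT-MIN the only new ingredient is the pairing-like CONSOLIDATE. Write $d_z$ for the number of children of the extracted minimum $z$, and $r=t(H)-1+d_z$ for the number of roots once those children have been moved to the root list. Two structural facts are needed, both already living inside the proofs of the two lemmas of this section. First, each cycle at least halves the root list -- the survivors being, up to cyclic boundary effects, exactly the strict local minima of the current cyclic key sequence; this is the ``local minimum'' lemma, sharpened so that \emph{all} local minima survive, which is also the fact behind the ``fewer than $\lg k$ cycles'' count in the degree lemma -- so CONSOLIDATE terminates with a single tree after doing $\sum_{i\ge 0}O(r/2^{i})=O(r)$ total work. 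Second, CONSOLIDATE creates no marked node, so (together with the clearing of $z$'s children's marks) it can only decrease $m(H)$, exactly as the CLRS linking step does. Given these, EXTRACT-MIN has actual cost $O(d_z)$ for scattering $z$'s children plus $O(r)$ for CONSOLIDATE, and $\Delta\Phi\le C\bigl(1-t(H)\bigr)$; substituting $r=t(H)-1+d_z$, its amortized cost is at most
\[
c_0 d_z+c_0\bigl(t(H)-1+d_z\bigr)+C\bigl(1-t(H)\bigr)=(c_0-C)\,t(H)+2c_0 d_z+(C-c_0),
\]
which for $C\ge c_0$ is $2c_0 d_z+O(1)=O(d_z+1)$. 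In the special case of a sequence of INSERTs followed by a single EXTRACT-MIN one gets $d_z<2\lg n$ directly from the degree lemma (every root then has degree $0$ before CONSOLIDATE), so EXTRACT-MIN is $O(\lg n)$ amortized there with no further work; the difficulty is entirely in the presence of DECREASE-KEYs and repeated EXTRACT-MINs.

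So the EXTRACT-MIN claim reduces to a single point -- $d_z=O(\lg n)$, and more generally that the maximum degree in $H$ is $O(\lg n)$, i.e.\ the preceding conjecture -- and I expect \emph{that} to be the real obstacle. It cannot be had by copying Lemma~\ref{lemma:child_degree}: that proof relies on the Fibonacci-style linking rule, whereas the pairing-like CONSOLIDATE deliberately discards degree information and will link a node of arbitrary degree beneath another, so there is no bound of the form ``a node of degree $d$ has $\Omega(\varphi^{d})$ descendants'', and the proven degree lemma controls only the increase \emph{within one} CONSOLIDATE (by $2\lg k\le 2\lg n$), while a priori a long-lived small-key node could be a surviving local minimum -- hence a ``winner'' that gains children -- in many successive CONSOLIDATEs. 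The attack I would try is an accounting argument that charges each child newly linked beneath a node $x$ either to a child of $x$ that already has comparatively large degree or to a distinct strict-local-minimum event on the current root list, combined with a bound on how often a fixed node can be such a local minimum over its lifetime in terms of the number of preceding INSERT and DECREASE-KEY operations; if this can be made to balance against the $C$ units of potential released each time two roots are linked, the degree bound, and with it the full theorem, would follow. Failing such a proof, the statement should be read as: DECREASE-KEY is $O(1)$ amortized unconditionally, and EXTRACT-MIN is $O(\lg n)$ amortized conditional on the degree conjecture (and unconditionally in the decrease-key-free, single-extract-min setting), with the argument above otherwise complete.
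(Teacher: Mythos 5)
There is no proof in the paper to compare against: the statement you are proving is explicitly labelled a \emph{conjecture}, and the author states up front that the analysis of these structures is incomplete. Judged on its own terms, your proposal is an honest and largely correct partial result rather than a proof. The DECREASE-KEY half is genuinely complete: the operation is verbatim CLRS, the potential $t(H)+2m(H)$ (rescaled) works, and nothing in that calculation touches degrees, so $O(1)$ amortized holds unconditionally. Your reduction of the EXTRACT-MIN half is also sound: the survivors of one cycle of the pairing-like CONSOLIDATE are (up to boundary effects at $H.root$) exactly the strict local minima of the current circular root list, hence at most half the roots, so the total work is $O(r)$ and the potential argument leaves an amortized cost of $O(d_z+1)$. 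One small caution in that step: because the code links $c$ under $p$ whenever $p<c$ even when $p$ has already been linked away earlier in the same cycle, a non-root can still acquire children mid-cycle; this does not affect the $O(r)$ work bound or the local-minima characterization of survivors, but it matters for any degree accounting you attempt later.

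The genuine gap is the one you name yourself: the claim $d_z=O(\lg n)$, i.e.\ the preceding degree conjecture, which the paper also leaves unproven. The section's second lemma bounds only the degree increase \emph{per} CONSOLIDATE by $2\lg k$, and nothing prevents a small-keyed, long-lived node from being a surviving local minimum in many successive CONSOLIDATEs and accumulating degree linearly in the number of EXTRACT-MIN operations; Lemma \ref{lemma:child_degree} does not transfer because its proof leans on the degree-indexed linking of the Fibonacci-style APPEND, which the pairing-like structure deliberately abandons. So the correct reading of your write-up is: DECREASE-KEY is proved, and EXTRACT-MIN is proved \emph{conditional} on the degree conjecture (plus unconditionally in the insert-only, single-extraction setting). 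That conditional structure matches exactly why the paper states this as an open conjecture rather than a theorem; your proposed charging scheme for the degree bound is a reasonable direction but is not yet an argument.
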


\begin{conjecture}
    Let $X$ be a sequence of $n$ numbers, $m_1$ be the subsequence of local minimums in $X$ in the order in which they appear in $X$, and $m_i$ be the local minimums in $m_{i-1}$ in the order they appear in $m_{i-1}$. Say $|m_k| = 1$ and $|m_{k-1}| > 1$. Then a comparison sort is dynamically optimal if and only if it sorts every list in 
    $$ \Theta (nk)$$
\end{conjecture}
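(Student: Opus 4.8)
\medskip\noindent\textbf{Proof proposal.}
The natural plan is to factor the biconditional through a single master estimate. Fix the reading of ``dynamically optimal'' as ``using $O(\mathrm{OPT}(X))$ comparisons on every input $X$'', where $\mathrm{OPT}(X)$ is the cost of the best comparison-based heap with $O(1)$ amortized decrease-key run as a heapsort on $X$ (in the spirit of the competitiveness conjecture above). The target would then be $\mathrm{OPT}(X) = \Theta(nk)$, with $k$ the oscillation depth of the statement. Granting this, the ``if'' direction is immediate --- a sort that runs in $\Theta(nk)$ on every list runs in $\Theta(\mathrm{OPT})$, hence is dynamically optimal --- and the ``only if'' direction is just the lower half of the master estimate applied pointwise. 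So the task splits into (U) the upper bound $\mathrm{OPT}(X) = O(nk)$, via an explicit algorithm, and (L) the matching lower bound $\mathrm{OPT}(X) = \Omega(nk)$, via an adversary/counting argument.

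For (U) the intended route is the layered merge induced by the local-minima hierarchy. Decompose $X$ into its maximal monotone runs; one ``phase'' merges adjacent runs pairwise, and one would want to upgrade the pairing-like heap's local-minima lemma to the statement that the run-heads left standing after phase $i$ are exactly the entries of $m_i$, in order. Each phase touches every surviving element $O(1)$ times, hence costs $O(n)$ comparisons; if the number of phases equals $k$, the total is $O(nk)$. (Equivalently: run the paper's pairing-like heap as a heapsort, with a potential that pays each element its remaining depth in the hierarchy, and telescope.) The genuinely routine ingredients are the per-phase $O(n)$ bound and the inequality $|m_{i+1}| \le \lceil |m_i|/2 \rceil$ (no two local minima of a list are adjacent); one also records that $k = O(\lg r)$, where $r$ is the number of runs, since $|m_1| = O(r)$ and the hierarchy halves thereafter.

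The main obstacle is that (U) quietly assumes the merge terminates in $k$ phases, and that assumption --- equivalently the lower bound (L) --- is where the plan breaks: $nk$ is simply too small to be the sorting complexity of $X$. This is already visible at $k=2$. Let $X = (1, p_1, 2, p_2, \dots, \ell, p_\ell)$ with $\ell = n/2$ and $p_1, \dots, p_\ell$ an arbitrary permutation of $\{\ell+1, \dots, 2\ell\}$; then $m_1 = (1, 2, \dots, \ell)$ is increasing, so $|m_2| = 1$ and $k = 2$, yet $X$ has $\Theta(n)$ monotone runs and the $p_i$ are mutually unconstrained. Among the $\ell!$ inputs obtained by relabeling the peaks --- all of which have $k=2$ --- any comparison sort uses $\Omega(\lg \ell!) = \Omega(n\lg n) \gg nk$ comparisons on at least one, i.e.\ there is a list with $k=2$ that is not sorted in $\Theta(nk)$ time. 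Hence $O(nk)$ is not achievable on all inputs, and the conjecture as literally stated fails.

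What I would actually aim to establish is a run-sensitive correction: $\mathrm{OPT}(X) = \Theta\!\big(\max(n,\, n \lg r)\big)$, where $r$ is the number of monotone runs of $X$. This dominates $nk$ for every $X$ (because $k = O(\lg r)$), so the quantity $nk$ still appears, just no longer as the leading term. Under this amendment, (U) becomes the classical $O(n(1+\lg r))$ balanced natural merge sort, and (L) becomes a comparison-tree count of the $2^{\Omega(n \lg r)}$ permutations sharing the run structure of $X$. Even so, the last real difficulty remains: making the lower bound hold \emph{pointwise} for each such $X$ rather than merely for the worst instance of a class, which is the sorting analogue of the long-open binary-search-tree dynamic-optimality question and is where I expect essentially all of the remaining work to concentrate.
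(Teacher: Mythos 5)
This statement is one of the paper's open conjectures; the paper offers no proof of it, so there is no ``intended route'' to compare you against. More importantly, what you have submitted is not a proof of the statement --- its core is a refutation, and the refutation looks correct. In your family $X=(1,p_1,2,p_2,\dots,\ell,p_\ell)$ with the $p_i$ a permutation of $\{\ell+1,\dots,2\ell\}$, every small value sits between two large ones, so $m_1=(1,2,\dots,\ell)$ is increasing, $m_2=(1)$, and $k=2$ for all $\ell!$ such inputs; the decision-tree bound then forces $\Omega(\lg \ell!)=\Omega(n\lg n)$ comparisons on at least one of them, so no comparison sort can sort \emph{every} list with $k=2$ in $O(nk)=O(n)$ time. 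The right-hand side of the biconditional is therefore unsatisfiable and the conjecture as literally stated cannot characterize dynamic optimality. This is genuinely useful, but you should present it as a counterexample and proposed reformulation, not as a proof proposal: as a proof of the stated conjecture it has the fatal defect that the statement is false, and no amount of work on your upper-bound sketch (U) can repair that.

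Two cautions about the repair you propose. First, $\mathrm{OPT}(X)=\Theta(\max(n,n\lg r))$ also fails as a \emph{pointwise} characterization, and not only for the dynamic-optimality reason you flag: the count of permutations sharing the run structure of $X$ is a multinomial in the run lengths, which is $2^{\Theta(n\lg r)}$ only when the runs are balanced. For $r$ runs of lengths $n-r+1,1,\dots,1$ the class has size $n^{O(r)}$, giving only an $\Omega(r\lg n)$ entropy bound, so even the worst-case-over-class version of your lower bound (L) is wrong as stated. Second, your step (U) quietly identifies the local-minima hierarchy $m_1,m_2,\dots$ with the phases of a natural merge of runs (``run-heads after phase $i$ are exactly $m_i$''); that identification is precisely the kind of claim the paper's unproved lemma about CONSOLIDATE cycles gestures at, and it needs a proof (boundary conventions for local minima at the ends of the list, and the asymmetry of the ``\textbf{else if} $p.parent$'' branch, both matter). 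The honest conclusion is: record the counterexample as a disproof, note that $k=O(\lg r)$ so $nk$ is subsumed by run-adaptive bounds, and pose the corrected characterization as a new open problem rather than a routine fix.
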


Again, something about optimality, although statement of this conjecture depends on Conjecture \ref{conj:pairing-dk,em}.
\begin{conjecture}
    The above heap is competitive with all comparison based heaps that have an $O(1)$ amortized decrease key.
\end{conjecture}

\subsection{Example}
Say we insert the same random sequence 
$$11, 13, 6, 10, 1, 8, 14, 12, 9, 5, 4, 3, 7, 2$$
into this heap and then call EXTRACT-MIN. EXTRACT-MIN removes $1$ and consolidates the remaining nodes. The resulting structure is presented below.

\begin{figure}
    \centering
    \includegraphics[height=5cm]{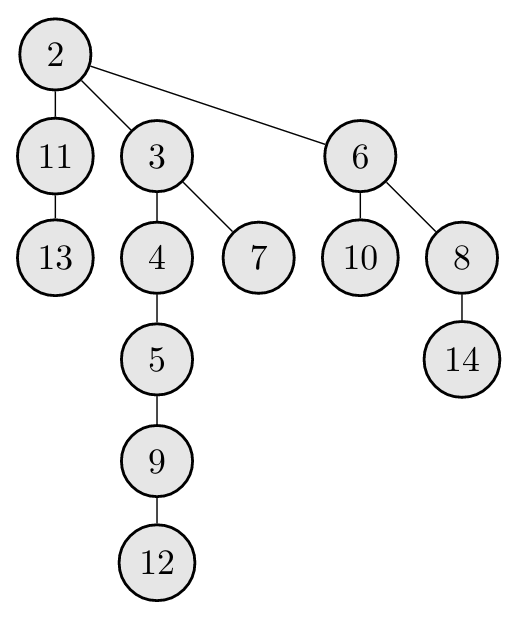}
    \caption{Example of pairing-like structure}
    \label{fig:my_label}
\end{figure}

\bibliographystyle{plain}
\bibliography{references}

\begin{thebibliography}{1}

\bibitem{CLRS}
Thomas~H. Cormen, Charles~E. Leiserson, Ronald~L. Rivest, and Clifford Stein.
\newblock {\em Introduction to Algorithms, Third Edition}.
\newblock The MIT Press, 3rd edition, 2009.

\bibitem{pairing}
Michael Fredman, Robert Sedgewick, Daniel Sleator, and Robert Tarjan.
\newblock The pairing heap: A new form of self-adjusting heap.
\newblock {\em Algorithmica}, 1:111--129, 11 1986.

\bibitem{fibonacci:heap:1987}
Michael~L. Fredman and Robert~Endre Tarjan.
\newblock Fibonacci heaps and their uses in improved network optimization
  algorithms.
\newblock {\em J. ACM}, 34(3):596--615, July 1987.

\bibitem{list-heaps}
Andrew Frohmader.
\newblock List heaps.
\newblock {\em CoRR}, abs/1802.05662, 2018.

\bibitem{Smooth_Heaps}
L\'{a}szl\'{o} Kozma and Thatchaphol Saranurak.
\newblock Smooth heaps and a dual view of self-adjusting data structures.
\newblock In {\em Proceedings of the 50th Annual ACM SIGACT Symposium on Theory
  of Computing}, STOC 2018, page 801–814, New York, NY, USA, 2018.
  Association for Computing Machinery.

\end{thebibliography}

\end{document}